\DeclareMathOperator{\sgn}{sgn}
\DeclareMathOperator{\poly}{poly}
\begin{document}

\def\newstr{\par\noindent}
\def\ms{\medskip\newstr}

\renewcommand{\proofname}{Proof}
\renewcommand{\labelenumi}{$\bullet$}

\newcommand{\ifdraft}[1]{#1}
\definecolor{aocolour}{rgb}{0.7,0.8,1}
\definecolor{vmcolour}{rgb}{1,0.8,0.7}
\newcommand{\ao}[1]{\ifdraft{\noindent\colorbox{aocolour}{A.O.: #1}}}
\newcommand{\vm}[1]{\ifdraft{\noindent\colorbox{vmcolour}{V.M.: #1}}}

\newcommand{\Z}{\mathbb{Z}}
\newcommand{\N}{\mathbb{N}}
\newcommand{\R}{\mathbb{R}}
\newcommand{\Q}{\mathbb{Q}}
\newcommand{\K}{\mathbb{K}}
\newcommand{\Cm}{\mathbb{C}}
\newcommand{\Pm}{\mathbb{P}}
\newcommand{\Zero}{\mathbb{O}}
\newcommand{\F}{\mathbb{F}_2}
\newcommand{\ilim}{\int\limits}
\newcommand{\impl}{\Rightarrow}
\newcommand{\set}[2]{\{ \, #1 \mid #2 \, \}}

\newcommand{\A}{\mathcal{A}}
\newcommand{\B}{\mathcal{B}}
\renewcommand{\C}{\mathcal{C}}

\theoremstyle{plain}
\newtheorem{thm}{Theorem}
\newtheorem{lm}{Lemma}
\newtheorem*{st}{Statement}
\newtheorem*{prop}{Property}
\newtheorem{prob}{Problem}
\newtheorem{idea}{Idea}
\newtheorem{conjecture}{Conjecture}

\newtheorem{oldtheorem}{Theorem}
\renewcommand{\theoldtheorem}{\Alph{oldtheorem}}
\newtheorem{oldconj}{Conjecture}
\renewcommand{\theoldconj}{\Alph{oldconj}}

\theoremstyle{definition}
\newtheorem{defn}{Definition}
\newtheorem{ex}{Example}
\newtheorem{cor}{Corollary}

\theoremstyle{remark}
\newtheorem*{rem}{Remark}
\newtheorem*{note}{Note}

\title{Cocke--Younger--Kasami--Schwartz--Zippel algorithm and relatives}

\author{Vladislav Makarov\thanks{Saint-Petersburg State University. Supported by Russian Science Foundation, project 18-11-00100.}}

\maketitle

\begin{abstract}
The equivalence problem for unambiguous grammars is an important, but very difficult open question in formal language theory.
Consider the \emph{limited}
equivalence problem for unambiguous grammars --- for two unambiguous grammars $G_1$ and $G_2$, 
tell whether or not they describe the same set of words of length $n$. Obviously, the naive approach requires exponential time with respect to $n$.
By combining two classic algorithmic ideas, I introduce a $O(\poly(n, |G_1|, |G_2|))$ algorithm for this problem. Moreover, the ideas behind
the algorithm prove useful in various other scenarious.
\end{abstract}

\section{Preface}

This Section contains some details about the technical structure of this paper
and the reasons for its existence. Therefore, feel free to skip it. Just remember that this
paper has a ``prequel'': ``Why the equivalence problem for unambiguous grammars 
has not been solved back in 1966?''.

This paper and the companion paper ``Why the equivalence problem for unambiguous grammars 
has not been solved back in 1966?' are based on the Chapters 4 and 3 of my Master's thesis~\cite{makarov-master} respectively.
While the thesis is published openly in the \href{https://dspace.spbu.ru/handle/11701/30103?locale=en}{SPbSU system}, it has not been published
in a peer-reviewed journal (or via any other scholarly accepted publication method) yet.

These two papers are designed to amend the issue. As of the current date, I have not
submitted them to a refereed venue yet. Therefore, they are only published as arXiv
preprints for now. 

Considering the above, it should not be surprising that huge parts of the original text are copied almost
verbatim. However, the text is not totally the same as the Chapter 4 of my thesis. Some things
are altered for better clarity of exposition and there are even some completely new parts.

Why did I decide to split the results into two papers? There are two main reasons. 

Firstly, both papers are complete works by themselves. From the idea standpoint, some
of the methods and results of this paper were motivated by the careful
observation of results of the companion paper. However, the main result of this paper
is stated and proven without any explicit references to the content of the companion paper.

The second reason is explained in the ``prequel'' paper in great detail. For simplicity, let us
say that this paper contains specific ``positive'' results, while the previous focuses
on more vague ``negative'' results.

With technical details out of the way, let us move on to the mathematical part.

\section{Cocke--Younger--Kasami--Schwartz--Zippel algorithm}\label{section_cyksz}

Recall the statement of the equivalence problem for unambiguous grammars.

\begin{prob}[The equivalence problem for unambiguous grammars.] \label{grail}
You are given two ordinary grammars $G_1$ and $G_2$. Moreover, you know  that
they are both unambiguous from a $100\%$ trustworthy source. Is there 
an algorithm to tell whether $L(G_1)$ and $L(G_2)$ are equal? Because the grammars
are guaranteed to be unambiguous, the algorithm may behave arbitrarily if either of $G_1$
and $G_2$ is ambiguous, including not terminating at all.
\end{prob}

In the ``prequel'' paper we studied the extents and limitations of matrix substitution.
Now, let us focus on another approach. 

\begin{defn} For a language $L$, its \emph{$n$-slice} is the language $\set{w}{w \in L, |w| = n}$ of all words from $L$ of length exactly $n$.
\end{defn}

As we have seen in the ``prequel'', matrix substitution is in some way ``independent'' over the slices of
$L(G_1)$ and $L(G_2)$: $L(G_1)$ and $L(G_2)$ are $d$-similar if and only if $n$-slices
of $L(G_1)$ and $L(G_2)$ are $d$-similar for all $n$. 

The main advantage of the matrix substitution approach is its \emph{uniformity}: it checks
all $n$-slices for similarity at the same time. However, we can do better if the uniformity is not
required. Specifically, consider the following problem:

\begin{prob}\label{slices} Given two unambiguous grammars $G_1$ and $G_2$, tell whether $n$-slices of $L(G_1)$ and $L(G_2)$ are equal. 
\end{prob}

The naive solution for Problem~\ref{slices} takes $\Theta(|\Sigma^n| \cdot \poly(n, |G_1|, |G_2|))$ time. Fortunately, it is unnecesarry to iterate over all strings:

\begin{thm}[Cocke--Younger--Kasami--Schwartz--Zippel algorithm]\label{five_guys_algo} If both $G_1$ and $G_2$ are in 
Chomsky normal form, it is possible to solve Problem~\ref{slices} in $O(n^3 \cdot (|G_1| + |G_2|))$ time with bounded one-sided error from randomization.
\end{thm}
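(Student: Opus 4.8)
The plan is to fuse the two classical ingredients hinted at by the name: the Cocke--Younger--Kasami table and the Schwartz--Zippel lemma.

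First I would turn the question ``are the $n$-slices equal?'' into a polynomial identity. For every position $i \in \{1,\dots,n\}$ and every letter $a \in \Sigma$ introduce a formal variable $x_{i,a}$, and to a word $w = w_1 \cdots w_n$ associate the squarefree monomial $\mu(w) = \prod_{i=1}^{n} x_{i,w_i}$. Distinct words of length $n$ differ in some position, hence give distinct and therefore linearly independent monomials, so the $n$-slices of $L(G_1)$ and $L(G_2)$ coincide if and only if the polynomials $P_{G_k} = \sum_{w \in L(G_k),\,|w|=n} \mu(w)$ ($k=1,2$) are equal. Every monomial of $P_{G_k}$ has total degree exactly $n$, so $P_{G_1} - P_{G_2}$ has degree at most $n$; by the Schwartz--Zippel lemma it is then enough to evaluate both polynomials once at a point whose coordinates are drawn independently and uniformly from a finite field $\mathbb{F}_q$ of size exceeding $2n$ (for instance $\mathbb{F}_{2^m}$ with $2^m > 2n$). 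If the slices are equal the two evaluations always agree, and if they differ they disagree with probability greater than $\tfrac12$; this is exactly the required bounded one-sided error, which as usual can be driven down by independent repetitions. Everything now hinges on evaluating $P_G$ at a given point fast.

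This is where the CYK dynamic programming comes in. Fix the sampled values $x_{i,a} \in \mathbb{F}_q$; for a grammar $G$ in Chomsky normal form with start symbol $S$, and for each nonterminal $A$ and each pair $0 \le i < j \le n$, put
\[
T_A(i,j) \;=\; \sum_{\tau} \ \prod_{\ell=1}^{j-i} x_{\,i+\ell,\ a_\ell(\tau)},
\]
the sum running over all parse trees $\tau$ with root $A$ that derive a word $a_1(\tau)\cdots a_{j-i}(\tau)$ of length $j-i$. I would establish the CYK-style recurrence $T_A(i,i+1) = \sum_{A\to a} x_{i+1,a}$, over all unit productions of $G$, together with, for $j > i+1$,
\[
T_A(i,j) \;=\; \sum_{A \to BC} \ \sum_{k=i+1}^{j-1} T_B(i,k)\, T_C(k,j),
\]
which is a bijection-on-trees identity: in CNF a parse tree of a word of length $\ge 2$ is determined by its top production $A\to BC$, the split point $k$, and the two subtrees, and the monomial of the tree factors as the monomial of the left subtree over positions $i{+}1,\dots,k$ times that of the right subtree over positions $k{+}1,\dots,j$. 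Filling the table in increasing order of $j-i$ uses $O(|G|\,n^3)$ field operations: there are $O(|G|\,n^2)$ entries, each built from at most $O(|G|)$ productions and $O(n)$ split points. Doing this for both grammars gives the claimed $O(n^3(|G_1|+|G_2|))$ running time, counting each arithmetic operation in $\mathbb{F}_q$ at unit cost; a polylogarithmic factor appears if one insists on a bit model, since $q = O(n)$.

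It remains to connect $T_S(0,n)$ to $P_G$, and this is the only place where unambiguity is used. The recurrence above is a purely combinatorial identity about counting parse trees, so it holds for every grammar in CNF; hence $T_S(0,n) = \sum_{\tau} \mu(w(\tau))$, ranging over all parse trees of length-$n$ words rooted at $S$. Because $\mu(w(\tau))$ depends only on the word $w(\tau)$, this rewrites as $\sum_{|w|=n} \bigl(\#\{\text{parse trees of } w \text{ from } S\}\bigr)\,\mu(w)$, and unambiguity of $G$ makes every one of these multiplicities $0$ or $1$; therefore $T_S(0,n) = P_G$ exactly. Note that the intermediate entries $T_A(i,j)$ with $A \ne S$ may well over-count words that admit several derivations from $A$, but that is harmless, since the only table entry we ever interpret as a polynomial evaluation is the root entry $T_S(0,n)$. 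Feeding this identity into the Schwartz--Zippel step of the first paragraph completes the proof. The one genuinely delicate point, and the main thing to get right, is precisely this last observation: resist demanding that every nonterminal be unambiguous, and instead see that over-counting below the root does not propagate to the root once the grammar is unambiguous there.
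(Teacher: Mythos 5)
Your proposal is correct and follows essentially the same route as the paper's: position-indexed variables turn the $n$-slices into degree-$n$ multilinear polynomials, a CYK-style dynamic program evaluates them at a random point of a field of size at least $2n$, and the Schwartz--Zippel lemma gives bounded one-sided error. Your closing remark --- that the recurrence is a parse-tree-counting identity valid for any CNF grammar, so unambiguity is invoked only once, to read $T_S(0,n)$ as $P_G$ --- is a slightly tidier phrasing of a point the paper treats implicitly by writing the recurrence directly for $f_{[\ell,r)}(L(C))$, which tacitly assumes every reachable, productive nonterminal is unambiguous (true after standard cleanup, but worth saying).
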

\begin{rem} Of course, Theorem~\ref{five_guys_algo} cannot help with Problem~\ref{grail} at all, as long as we are interested in \emph{decidability only}. However, the running time improvement over the naive
algorithm is drastic, making iterating over all small $n$ a viable heuristic for Problem~\ref{grail}. 
\end{rem}

The rest of this Section is dedicated to the proof of Theorem~\ref{five_guys_algo}, and, even more
importantly, overview of the main ideas. The next several paragraphs only provide motivation for
the ideas that are used in the proof, so I will keep details a bit vague in them. Moreover, these paragraphs
use the ideas from the ``prequel'' paper; feel free to skip them, as the proof itself should also be clear enough.

Let us look back at the Amitsur--Levitsky identity. As I noted before, it looks like the definition
of determinant. To get \emph{exactly} the definition of determinant, we need to add another index
to each $X_i$:
\begin{align}
	\label{before}
	 \text{before: } &\sum\limits_{\sigma \in S_{2d}} (-1)^{\sgn(\sigma)}
	X_{\sigma(1)} X_{\sigma(2)} \ldots X_{\sigma(2d)}   \\
	\label{after}
	\text{after: } &\sum\limits_{\sigma \in S_{2d}} (-1)^{\sgn(\sigma)}
	X_{1, \sigma(1)} X_{2, \sigma(2)} \ldots X_{2d, \sigma(2d)} 
\end{align}

While the Expression~\eqref{before} is a polynomial identity for $d \times d$ matrices, 
the Expression~\eqref{after} is not, even for $1 \times 1$ matrices (that is, scalars). Indeed, there \emph{are} some matrices
with non-zero determinant.

Now, suppose that we have two unambiguous grammars $G_+$ and $G_-$ over an alphabet 
$\Sigma \coloneqq \{a_1, a_2, \ldots, a_n\}$, generating even and odd permutations of length $2d$ respectively.
More formally, $L(G_+) = \set{a_{\sigma(1)} a_{\sigma(2)} \ldots a_{\sigma(2d)}}{\sigma \in S_{2d}, \sgn(\sigma) = +1}$ and $L(G_+) = \set{a_{\sigma(1)} a_{\sigma(2)} \ldots a_{\sigma(2d)}}{\sigma \in S_{2d}, \sgn(\sigma) = -1}$. Then, by Amitsur--Levitsky theorem,
we cannot tell them apart by substituting matrices of size $d \times d$ and smaller. Of course, we can pick
larger matrices or bash this specific instance of the Problem~\ref{grail} in many other ways, because both languages are finite. But it is not the point, so let us not do that.

In the ``prequel'', we had no choice except for representing everything
implicitly via equations and rely on Tarski--Seidenberg theorem to rule
everything out. There are two reasons for that: we wanted everything to work for all matrices
of a small norm and we wanted to handle all $n$-slices at the same time. If we know the matrices
beforehand and want to consider only a specific slice (the $(2d)$-th slice in our case), we can compute the values of $f$ with dynamic programming. And if we somehow manage incorporate the second indices (as in Expression~\eqref{after}) in the dynamic programming, then there will not be any systematic problem
with polynomial identities (accidental coincidencies can still happen, though).

As it turns out, to solve Problem~\ref{slices}, we need only these two ideas (an \emph{indexation trick},
as I call it, and a somewhat careful dynamic programming), but we do not need to substitute matrices
anymore, just numbers will be enough. These ideas will be crucial through the rest of the text, so
I recommend to understand them well.

\begin{proof}[Proof of Theorem~\ref{five_guys_algo}, assuming Theorem~\ref{eval}.]
Specifically, consider the following mapping $f$ from subsets of $\Sigma^n$ to polynomials with integer
coefficients in $|\Sigma| \cdot n$ variables $x_{a, i}$, where $a$ ranges over $\Sigma$ and $i$
ranges over $[1, n]$:  a single word $w_1 w_2 \ldots w_n$ maps to a monomial $f(w) \coloneqq x_{w_1, 1} \cdot x_{w_2, 2} \cdot \ldots \cdot x_{w_n, n}$ and a language $L \subset \Sigma^n$ maps to 
$f(L) \coloneqq \sum\limits_{w \in L} f(w)$. Here, I abuse notation a little bit, because $f$ is actually two different mappings: one is from $\Sigma^n$ to monomials and the other is from subsets
of $\Sigma^n$ to polynomials. I hope that this does not cause confusion. 

It is important that these polynomials are normal, commutative polynomials, no trickery here (at least in the proof of Theorem~\ref{five_guys_algo}). And there is no need to substitute matrices in place
of variables, because the indexing itself takes care of noncommutativity of word concatenation. For example, 
$f(ab) = x_{a, 1} x_{b, 2}$, but $f(ba) = x_{b, 1} x_{a, 2} = x_{a, 2} x_{b, 1} \neq x_{a, 1} x_{b, 2}$. Hence, different subsets
of $\Sigma^n$ have different images under $f$: if a word $w$ is present in $K$, but not in $L$,
then the monomial $f(w) = x_{w_1, 1} x_{w_2, 2} \ldots x_{w_n, n}$ is present in $f(K)$, but not in $f(L)$. Moreover, all coefficients of $f(K)$ are either $0$ or $1$, depending on whether the corresponding word belongs to $K$ or not.

In particular, if $L_1$ and $L_2$ are $n$-slices of $L(G_1)$ and $L(G_2)$ respectively, then $f(L_1) = f(L_2)$ if and only if $L_1 = L_2$. Moreover, both $f(L_1)$ and $f(L_2)$ are polynomials of degree at most $n$, because they both are sums of several monomials of degree exactly $n$.
Therefore, we can use Schwartz-Zippel lemma here to compare them without computing them explicitly.

To be exact, let us fix some large finite field $F$. Because all coefficients of $f(L_1)$ and $f(L_2)$
are zeroes or ones, $f(L_1)$ and $f(L_2)$ are equal as integer polynomials if and only
if they are equal as polynomials over $F$. Evaluate them in 
a random point from $F^{|\Sigma| \cdot n}$ by the following Theorem~\ref{eval}. By Schwartz-Zippel lemma, the probability of a false positive (polynomials are not equal, but their values are) is  at most $\frac{n}{|F|}$, which is at most $1/2$ when $|F| \geqslant 2n$. Of course, there are no false negatives. Repeat several times to obtain the desired error probability. 
\end{proof}

\begin{thm}\label{eval} Given an unambiguous grammar $G$ in Chomsky normal form and a field $F$, it is possible to evaluate $f(K)$ in a point $x \in F^{|\Sigma| \cdot n}$ in $O(n^3 \cdot |G|)$ field operations, 
where $K$ is the $n$-th slice of $L(G)$.
\end{thm}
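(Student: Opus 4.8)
The plan is to run a Cocke--Younger--Kasami dynamic programming over the field $F$, replacing the usual Boolean reachability table by a table of field elements. Write the rules of $G$ in Chomsky normal form, so that every rule is either $A \to BC$ with $B, C$ nonterminals or $A \to a$ with $a \in \Sigma$; the rule $S \to \varepsilon$, if present, may be ignored since we may assume $n \geqslant 1$ (for $n = 0$ the answer is trivial). For a nonterminal $A$ and positions $1 \leqslant i \leqslant j \leqslant n$, let $g(A, i, j) \in F$ be the value, at the given point $x$, of the polynomial
\[
	\sum_{\substack{w = w_1 \cdots w_m \\ m = j - i + 1}} c_{A, w} \cdot x_{w_1, i}\, x_{w_2, i+1} \cdots x_{w_m, j},
\]
where $c_{A, w}$ is the number of parse trees of $w$ whose root is labelled $A$, and the sum runs over all words of the indicated length (so all but finitely many terms vanish). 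The target is $g(S, 1, n)$: the polynomial it evaluates is $\sum_{w \in \Sigma^n} c_{S, w} f(w)$, and unambiguity of $G$ is exactly the statement that $c_{S, w} \in \{0, 1\}$ for every $w$ of length $n$, with $c_{S, w} = 1$ iff $w \in K$; hence that polynomial equals $f(K)$ and $g(S, 1, n)$ is $f(K)$ evaluated at $x$. Note that for intermediate nonterminals the coefficients $c_{A, w}$ need not lie in $\{0, 1\}$; this is harmless, because the $g$-values are only ever manipulated as numbers and only the start-symbol entry is required to be coefficient-free.

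First I would fill the diagonal: $A$ derives a one-letter word $a$ exactly when $A \to a$ is a rule, with a single parse tree, so $g(A, i, i) = \sum_{a:\, (A \to a) \in G} x_{a, i}$, which takes $O(n \cdot |G|)$ field operations over all $A$ and $i$. Then I would fill the entries in increasing order of $\ell = j - i$. A parse tree of a word $w$ of length $m \geqslant 2$ rooted at $A$ is the same thing as a choice of a binary rule $A \to BC$, a split point $k$ with $i \leqslant k < j$, a parse tree rooted at $B$ of $w_1 \cdots w_{k - i + 1}$ (these letters occupying global positions $i, \dots, k$), and a parse tree rooted at $C$ of the remaining suffix (occupying positions $k + 1, \dots, j$); distinct trees correspond to distinct such tuples and conversely. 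Combining this with the monomial factorization $x_{w_1, i} \cdots x_{w_m, j} = \bigl( x_{w_1, i} \cdots x_{w_{k - i + 1}, k} \bigr) \bigl( x_{w_{k - i + 2}, k + 1} \cdots x_{w_m, j} \bigr)$ yields the recurrence
\[
	g(A, i, j) = \sum_{(A \to BC) \in G}\ \sum_{k = i}^{j - 1} g(B, i, k) \cdot g(C, k + 1, j),
\]
the tree counts multiplying over the two independent subtrees and adding over the disjoint choices of rule and split point.

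For the running time I would charge the work to rules: each binary rule $A \to BC$ contributes, over the $O(n^2)$ pairs $(i, j)$ and the $O(n)$ split points, a total of $O(n^3)$ field multiplications and additions, so the whole table costs $O(n^3 \cdot |G|)$ field operations, which dominates the $O(n \cdot |G|)$ spent on the diagonal; reading off $g(S, 1, n)$ then returns $f(K)$ at $x$ within the claimed budget. I do not expect a genuine obstacle here; the only point that needs care is the correctness argument sketched above, namely verifying that the recurrence faithfully counts parse trees and pinpointing that it is unambiguity of $G$, and nothing more, that forces the start-symbol coefficients down to $0$ and $1$ while leaving the intermediate rows possibly uncollapsed.
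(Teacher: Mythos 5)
Your proof is correct and uses essentially the same CYK-style dynamic programming over $F$ as the paper's proof of this theorem, with the same table (indexed by a nonterminal and an interval), the same recurrence, and the same $O(n^3 \cdot |G|)$ accounting. The one difference is in the justification: the paper derives the recurrence directly from the additivity of $f$ over disjoint unions and its multiplicativity over length-split concatenations, which implicitly requires all (reachable, productive) nonterminals to be unambiguous, whereas you prove the recurrence unconditionally as a parse-tree count and invoke unambiguity of $G$ only at the very end to collapse the start-symbol coefficients $c_{S, w}$ to $\{0, 1\}$ — a slightly cleaner localization of where the hypothesis is actually used, though not a different algorithm.
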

\begin{proof} The proof is an adaptation of standard Cocke--Younger--Kasami cubic parsing algorithm.
For a non-empty subsegment $[\ell, r)$ of $[1, n + 1)$ and a language $K \subset \Sigma^{r - \ell}$, 
define $f_{[\ell, r)} \coloneqq \sum\limits_{w \in K} x_{w_1, \ell} x_{w_2, \ell + 1} \ldots x_{w_{r - \ell}, r - 1}$. This is a generalisation of $f$. Indeed, $f_{[1, n + 1)}(K) = f(K)$ when $K \subset \Sigma^n$, and, therefore, both sides of the equation are defined.

Then, $f_{[\ell, r)} (K \sqcup L) = f_{[\ell, r)} (K) + f_{[\ell, r)} (L)$ and $f_{[\ell, r)}(KL) = f_{[\ell, m)} (K) f_{[m, r)} (L)$, as long as everything is defined. Note that there is no requirement for the concatenation $KL$ to be unambiguous, because it is umambiguous automatically. Indeed, all words in $K$ have length $m - \ell$ and all words in $L$ have length $r - m$. Hence, for any word $w$ in $KL$, there is only one way to represent is a concatenation of a word from $K$ and $L$: $w_1 \ldots w_{m - \ell} \in K$ and $w_{m - \ell + 1} \ldots w_{r - \ell} \in L$.

Hence, for a nonterminal $C$ of $G$ and a non-empty subsegment $[\ell, r)$ of $[1, n + 1)$,
\begin{equation}
f_{[\ell, r)}(L(C)) \coloneqq \begin{cases} \sum\limits_{(C \to a) \in R} x_{a, \ell}, &\text{if $r - \ell = 1$,} \\
\sum\limits_{(C \to DE) \in R} \sum\limits_{m=\ell+1}^{r-1} f_{[\ell, m)}(L(D)) f_{[m, r)} (L(E)), &\text{otherwise.} \end{cases}
\end{equation}
In the second of the above cases, we implicitly use the unambiguity of concatenation $L(D) L(E)$ by iterating over $m$. 

Now, let us evaluate $f_{[\ell, r)}(L(C))$ in the point $x$. 
To do so, apply the above formulas in the order of increasing $r - \ell$.
In the end, the value of $f_{[1, n+1)}(L(S))$ in $x$ is 
exactly what we needed to compute. The total number of feild operations is $O(n^3 \cdot |R|) = O(n^3 \cdot |G|)$: for each
triple $\ell < m < r$, we iterate over all ``normal'' rules of the grammar.
\end{proof}
\begin{rem} It is possible to check whether $L(G_1)$ and $L(G_2)$ have a difference of length \emph{at most} $n$ (rather then \emph{exactly} $n$, as in Theorem~\ref{five_guys_algo}) with the same running time. To do so, run the above procedure, but compare
the sums $\sum\limits_{i=2}^{n+1} f_{[1, i)}(L(G_1))$ and $\sum\limits_{i=2}^{n+1} f_{[1, i)}(L(G_2))$
instead of $f_{[1, n+1)}(L(G_1))$ and  $f_{[1, n+1)}(L(G_2))$.
\end{rem}

Let us look at the polynomial $f$ more closely. As a polynomial with integer coefficients, it can be seen as a function from $\Z^{|\Sigma| \cdot n}$ to $\Z$. An \emph{input} for $f$ is any assignment of integer values to variables $x_{a, i}$, where
$a$ ranges over $\Sigma$ and $i$ ranges over $[1, n]$. Intuitively, the variable $x_{a, i}$ represents
``to which extent'' the letter $a$ appears in the $i$-th position. There are two special kinds of inputs:
null-like and word-like.
\begin{defn} An input $x$ is \emph{null-like} if there exists an index $i$ from $1$ to $n$, such that
$x_{a, i} = 0$ for all $a \in \Sigma$.
\end{defn}
\begin{defn} An input $x$ is \emph{word-like} if for all indices $i$ from $1$ to $n$ there is \emph{exactly one} such $a \in \Sigma$, that $x_{a, i} = 1$ and $x_{b, i} = 0$ for all $b \in \Sigma \setminus \{a\}$.
\end{defn}

Null-like inputs are not interesting: because all $x_{a, i} = 0$ for all $a \in \Sigma$, the $i$-th variable
in each monomial of $f$ evaluates to $0$. Hence, each monomial evaluates to $0$ and so does $f$.

Word-like inputs are more important. A word-like input is a renundant representation of a word:
instead of just $n$ letters, we have $|\Sigma| \cdot n$ answers to questions ``is there a letter $a$ on the $i$-th position?''. And, fittingly, there is exactly one letter on each position. 

Consider a word-like input $x$ corresponding to a word $w$. Clearly, on the input $x$, the monomial $x_{u_1, 1} x_{u_2, 2} \ldots  x_{u_n, n}$ evaluates to $1$ if $w = u$ and to $0$ otherwise. Hence, the value of $f(K)$ on a
word-like input simply shows whether the corresponding word lies in $K$ or not. This is just 
the parsing problem for the grammar $G$ and the word $w$.

In fact, under a closer examination (I will not dwelve in the details), 
the proof of Theorem~\ref{eval} becomes \emph{exactly}
the standard cubic-time parsing algorithm when we consider only word-like
inputs $x$. 

In a sense, the whole approach has even remote chances of working \emph{exactly} because
we consider a much larger set of inputs, not only word-like ones. Restricting ourselves only 
to word-like input would be the same as picking random \emph{words} of length $n$ and checking
whether they belong to $L_1$ and $L_2$. Picking random \emph{words} works poorly, for example, if $L_1$ and $L_2$ are obtained by explicitly
adding two different words to the same unambiguous grammar (and in the myriad of more complicated situations).

The above observation is extremely important in the following Sections. We will see its ``positive''
side in Section~\ref{section_monotone} and its ``negative'' side in Section~\ref{section_conjunctive}.

\section{Relations to circuit complexity}\label{section_monotone}

Of course, there is a more general idea in the proof of Theorem~\ref{five_guys_algo}:

\begin{idea}\label{idea} To prove something about $n$-slices of $L(G)$, where $G$ is some type of grammar (unambiguous, ordinary, GF(2), et cetera), define $f(K)$ in terms of the operations that are ``inherent'' to the corresponding grammar formalism. Evaluate $f(L(G))$ similarly to Theorem~\ref{eval} and somehow exploit the fact that evaluating $f(L(G))$ is a much more general problem than string parsing for $G$.
\end{idea}

For example, let me sketch the proof of the following result:
\begin{thm}\label{five_guys_gf2} Given a GF(2)-grammar $G$, it is possible to tell whether $n$-slice of $L(G)$
is empty in $O(n^3 \cdot |G|)$ time with randomization and bounded one-sided error.
\end{thm}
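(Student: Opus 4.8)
Let me think about what a GF(2)-grammar is. A GF(2)-grammar is like an ordinary context-free grammar but the semantics work over GF(2): the number of parse trees of a word is counted modulo 2, and the word is "in the language" iff it has an odd number of parse trees. So $L(G)$ is the set of words with an odd number of derivations. We want to tell whether the $n$-slice is empty, i.e., whether every word of length $n$ has an even number of parse trees.

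The natural idea following Idea \ref{idea}: define $f(K)$ not as a sum over distinct words in $K$, but as a sum over *parse trees*, with the monomial $x_{w_1,1}\cdots x_{w_n,n}$ attached to each tree yielding $w$. Then $f(L(G))$, computed over GF(2) (or any field of characteristic 2), collapses exactly to $\sum_{w} (\#\text{trees of } w \bmod 2) \cdot x_{w_1,1}\cdots x_{w_n,n}$, which is the zero polynomial iff the $n$-slice is empty. The recursion from Theorem \ref{eval} still works: for $C \to DE$ we sum over split points $m$, and crucially we don't need unambiguity anymore because we *want* to count all trees — the ambiguity of concatenation is now a feature, not a bug, since we're working mod 2 and just adding up contributions. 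So the recurrence becomes $f_{[\ell,r)}(L(C)) = \sum_{(C\to a)} x_{a,\ell}$ for length one, and $f_{[\ell,r)}(L(C)) = \sum_{(C\to DE)} \sum_{m} f_{[\ell,m)}(L(D)) f_{[m,r)}(L(E))$ otherwise, interpreted over a field $F$ of characteristic $2$.

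The proof would then go: (1) Put $G$ in an appropriate normal form (Chomsky normal form for GF(2)-grammars — this needs a remark that such a normal form exists and preserves the GF(2)-language, which is standard). (2) Set up $f$ summing over parse trees, and prove by induction on $r-\ell$ that the above recurrence computes $f_{[\ell,r)}(L(C))$ correctly over a characteristic-2 field, using bilinearity of the monomial-product and the fact that "number of ways to split" is exactly what the double sum counts. (3) Observe the $n$-slice of $L(G)$ is empty iff $f_{[1,n+1)}(L(S)) = 0$ as a polynomial over $\F$ (here characteristic 2 is essential: a word with $4$ trees contributes $0$, matching the GF(2) semantics). (4) Apply Schwartz--Zippel over a large field $F$ of characteristic $2$, say $\F_{2^k}$ with $2^k \geqslant 2n$: evaluate at a random point in $n^3|G|$ field operations by the recurrence, get one-sided error $\leqslant n/|F| \leqslant 1/2$, and amplify. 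Total time $O(n^3|G|)$ field operations.

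The main obstacle is step (2) combined with getting the characteristic right: one must be careful that the recurrence genuinely counts parse trees mod $2$ and not, say, distinct derivations in some other sense, and that the field has characteristic $2$ so that the polynomial $f(L(G))$ over $F$ faithfully encodes the GF(2)-language (an even number of trees must evaluate to the zero contribution). A secondary point is confirming that a Chomsky-like normal form for GF(2)-grammars exists and is computable in polynomial time without changing the $n$-slice for any $n$; this is folklore but should be cited or sketched. Everything else — the dynamic programming, the bilinearity, the Schwartz--Zippel application — is a direct transcription of the proofs of Theorems \ref{eval} and \ref{five_guys_algo}, now with the simplification that no unambiguity hypothesis is needed.
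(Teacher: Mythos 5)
Your proof is correct and follows essentially the same route as the paper's sketch: define $f$ over a field of characteristic $2$, run the same cubic dynamic programming (now interpreted over $\mathbb{F}_2$-coefficients, where the paper phrases it via symmetric difference and GF(2)-concatenation, and you phrase it equivalently as counting parse trees mod $2$), and apply Schwartz--Zippel over $\mathbb{F}_{2^k}$. Your observation that a Chomsky-like normal form for GF(2)-grammars should be invoked is a fair point the paper's sketch leaves implicit.
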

\begin{proof}[A sketch.] The proof is mostly the same. Let us focus on the differences. The first difference is that we define $f$ and $f_{[\ell, r)}$ as polynomials over $\mathbb{F}_2$, not $\Z$, because we have symmetric difference and GF(2)-concatenation instead of disjoint union and unambiguous concatenation now. 

The second difference is that $F$ has to be a field of characteristic $2$ (otherwise the notion 
of evaluating $f(K)$ on a point of $F$ makes no sense). Again, this is not a problem: for each $b$,
there is a field of characteristic $2$ and size $2^b$.
\end{proof}
\begin{rem} Of course, Theorem~\ref{five_guys_algo} immediately follows from Theorem~\ref{five_guys_gf2}, 
but I think that the ideas are easier to understand when they are illustrated on the proof of Theorem~\ref{five_guys_algo}.
\end{rem}

What is, in my opinion, much more interesting and important, is that we can use Idea~\ref{idea}
to prove some \emph{lower bounds} style results. The general ``line of attack'' is like this:
\begin{enumerate}
\item[1.] Consider a language $L_{\textrm{orig}}$ over an alphabet $\Sigma$. Assume that there is a grammar $G$ of some type (unambiguous, ordinary, GF(2), et cetera) for $L_{\textrm{orig}}$. 
\item[2.] By applying standard grammar closure properties (closure under certain types of transductions, set operations, et cetera), obtain a faimly $L_n$ of languages, each over its own
alphabet $\Sigma_n$, such that a grammar $G$ for $L$ leads to a grammar $G_n$ of size 
$\poly(n, |G|)$ for $L_n$. The alphabets $\Sigma_n$ can depend on $n$ and even grow in size
polynomially with $n$. 
\item[3.] Use the Idea~\ref{idea} to evaluate some kind of function on $|\Sigma| \cdot n$ variables
with some kind of circuit of size $\poly(n, |G_n|) = \poly(n, |G|)$. The exact nature of the variables, the function and the circuit depends on the type of grammar.
\item[4.] Use some kind of circuit lower bound (either already known or a conjectured one) to show that this function cannot be evaluated
by a small circuit of a small size.
\item[5.] Hence, there was no grammar for $L_{\textrm{orig}}$ in the first place.
\end{enumerate}

The exact details of the plan are flexible. For example, sometimes it is possible to omit the first step altogether and still get some kind of meaningful result. Moreover, if the circuits on the step 4 are unrestricted boolean circuits and step 3 takes polynomial time, then we can replace ``circuit lower bound'' by a ``algorithm running time lower bound'' in the step 4, because all circuits here are generated by a uniform polynomial-time procedure. In principle, it is possible to replace all mentions
of ``polynomial time'' with explicit functions that may not even be polynomial (say, like, $2^{n/10}$), but I do not yet know any situation where that would significantly help.

To see, how the plan would work for ordinary grammars, consider the language $P_n$ of
all permutations of length $n$ over the alphabet $\Sigma_n = \{1, 2, \ldots, n\}$ (yes, the digits are
the letters here). For example, $P_3 = \{123, 132, 213, 231, 312, 321\}$. It is already known that
the size of the smallest ordinary grammar for $P_n$ grows with $n$ exponentially~\cite[Theorem 30]{sn_ordinary}. Let us
prove that the growth is superpolynomial with our method.

Let us define $f(K)$ as a boolean function: $f(w) \coloneqq x_{w_1, 1} \wedge x_{w_2, 2} \wedge \ldots \wedge x_{w_n, n}$ and $f(K) \coloneqq \bigvee\limits_{w \in K} f(w)$. Why a boolean function? Because set union and concatenation are expressed in terms of disjunction and conjunction.
Such a definition works well with concatenation and set union, which correspond to the conjunction and disjunction respectively.

Specifically, $f_{[\ell, r)}(K \cup L) = f_{[\ell, r)} (K) \vee f_{[\ell, r)} (L) $ and $f_{[\ell, r)} (KL) = f_{[\ell, m)}(K) \wedge f_{[m, r)}(L)$, as long as everything is defined.
The latter is obvious, so let me illustrate the former on an example:
$f_{[1, 3)}(\{ab, aa\}) \wedge f_{[3, 5)}(\{cc, ad\})) = ((x_{a, 1} \wedge x_{b, 2}) \vee (x_{a, 1} \wedge x_{a, 2})) \wedge ((x_{c, 3} \wedge x_{c, 4}) \vee (x_{c, 3} \wedge x_{c, 4})) = 
((x_{a, 1} \wedge x_{b, 2} \wedge x_{c, 3} \wedge x_{c, 4}) \vee (x_{a, 1} \wedge x_{a, 2} \wedge x_{a, 3} \wedge x_{c, 4}) \vee
(x_{a, 1} \wedge x_{a, 2} \wedge x_{c, 3} \wedge x_{c, 4}) \vee (x_{a, 1} \wedge x_{a, 2} \wedge x_{a, 3} \wedge x_{d, 4})) = f_{[1,5)}(\{abcc, abad, aacc, aaad\}) = f_{[1, 5)} (\{ab, aa\} \cdot \{cc, ad\})$ by distributivity properties.

In the end, we get a \emph{monotone} (only disjunctions and conjunctions) boolean circuit
of size $O(n^2 \cdot |G_n|)$ that computes $f(P_n) = \bigvee\limits_{\sigma \in S_n} (x_{\sigma(1), 1} \wedge x_{\sigma(2), 2} \ldots \wedge x_{\sigma(n), n})$. In other words, a monotone circuits
that checks whether there is a perfect matching in a bipartite graph with $n$ vertices in each part. 
(the variables $x_{i, j}$ correspond to the presence or absence of an edge between the $i$-th vertex on the left and the $j$-th vertex on the right). Razborov~\cite{razborov} proved that this problem requires monotone
circuits of size $n^{\Omega(\log n)}$. Hence, $|G_n|$ grows superpolynomially. 

To get a grammar nonexistence result for a specific language (as I said, this is often unnecessary), encode the letter $i$ in $\Sigma_k$ by the word $w_{k, i} \coloneqq 0^{i-1} 1 0^{k-i}$ over an alphabet $\Sigma = \{0, 1\}$. Now, define $P_{\textrm{orig}}$ as the  union of $P_{\textrm{base}} \coloneqq \bigcup\limits_{k=0}^{+\infty} \set{w_{k, \sigma(1)} w_{k, \sigma(2)} \ldots w_{k, \sigma(k)}}{\sigma \in S_k}$ and \emph{any} set $P_{\textrm{extra}}$ of words that do not decode to anything. That is, $P_{\textrm{extra}}$ can be any subset of $\{0,1\}^* $ that does not intersect 
$\bigcup\limits_{k=0}^{+\infty} \set{w_{k,i}}{1 \leqslant i \leqslant n}^*$. Then, there is no ordinary grammar for $P_{\textrm{orig}}$, independently of the choice of the $P_{\textrm{extra}}$.

Indeed, to get $P_n$ from $P_{\textrm{orig}}$ we need to intersect with $\Sigma^{n^2}$ (to leave only the words of correct length) and decode the letters by replacing $w_{n, i}$ with $i$ (this can be done by a deterministic transducer). Because $P_{\textrm{extra}}$ specifically can contain only
words that do not successfully decode to anything, the resulting language will be exactly $P_n$.
Moreover, each of the two steps blows up the grammar size only by a polynomial in $n$ factor.

Hence, there is no ordinary grammar for $P$ (grammar for $P$ implies small grammars for $P_n$, and that in turn implies small monotone circuits for perfect matching). To be honest, this is not a particularly interesting (or new) result, but it does a good job of highlighting the steps of the plan.

\section{Can we do the same with conjunctive grammars?}\label{section_conjunctive}

In the previous Section we have seen how the indexing trick helps with lower-bound style
results for unambiguous and even arbitrary ordinary grammars. Can we do something
similar with conjunctive grammars? This is an interesting question, because there is 
currently no known satisfying methods of proving that some language cannot be described
by a conjunctive grammar, except for the following theorem which is based on the complexity
of standard algorithms for parsing:
\begin{oldtheorem}\label{conj_bound} If $L$ is described by a conjunctive grammar, then
$L \in \mathrm{DTIME}(O(n^3)) \cap \mathrm{DSPACE}(O(n))$.
\end{oldtheorem}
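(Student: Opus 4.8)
\medskip\noindent
The plan is to establish the two bounds separately, in each case by unrolling the standard parsing algorithms for conjunctive grammars. First I would recall that every conjunctive grammar can be put into \emph{binary normal form} (the conjunctive analogue of Chomsky normal form, due to Okhotin): after eliminating $\varepsilon$-rules and useless symbols, every rule is either $A \to a$ with $a$ a single terminal, or $A \to B_1 C_1 \mathbin{\&} B_2 C_2 \mathbin{\&} \cdots \mathbin{\&} B_k C_k$ with all $B_m, C_m$ nonterminals, with at most one extra rule $S \to \varepsilon$ for the start symbol which only governs whether $\varepsilon \in L$ and can be handled on the side. So we may assume $L = L(G)$ for such a $G$, and since $|G|$ is a constant we only need to track the dependence on $n = |w|$.

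For the time bound I would run the generalized Cocke--Younger--Kasami algorithm: given $w = w_1 \cdots w_n$, maintain a table with entries $T[i,j] \subseteq N$ for $1 \le i \le j \le n$, keeping the invariant $A \in T[i,j] \iff w_i \cdots w_j \in L(A)$. Fill it by increasing interval length: $T[i,i] = \set{A}{(A \to w_i) \in R}$, and for $j > i$ put $A \in T[i,j]$ iff there is a rule $A \to B_1 C_1 \mathbin{\&} \cdots \mathbin{\&} B_k C_k$ such that for \emph{every} conjunct $m$ there is a split point $p$ with $i \le p < j$, $B_m \in T[i,p]$ and $C_m \in T[p+1,j]$. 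The essential feature is the quantifier order ``$\exists$ rule, $\forall$ conjunct, $\exists$ split'': different conjuncts of the same rule may be satisfied at different split points, which is exactly the extra power of conjunction over plain concatenation. Each of the $O(n^2)$ cells is computed by scanning $O(n)$ split points against all rules, so the total cost is $O(n^3 |G|) = O(n^3)$, and $w \in L$ iff $S \in T[1,n]$.

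For the space bound the bottom-up table needs $\Theta(n^2)$ cells, so I would instead use a top-down recursive recognizer: a Boolean routine $\mathrm{Recognize}(A, i, j)$ returning ``$w_i \cdots w_j \in L(A)$'', which for $i = j$ consults the terminal rules and for $j > i$ iterates over rules $A \to B_1 C_1 \mathbin{\&} \cdots \mathbin{\&} B_k C_k$, over conjuncts $m$ and over split points $p$, recursing into $\mathrm{Recognize}(B_m, i, p)$ and $\mathrm{Recognize}(C_m, p+1, j)$ and accepting a rule iff every conjunct admits some $p$ making both calls accept. Every recursive call is on an interval strictly shorter than the current one, so the recursion depth is at most $n$, and each stack frame stores only the nonterminal, the two endpoints, and a constant number of loop counters bounded by $n$ and $|G|$; hence the whole run uses $O(n)$ space, and $\mathrm{Recognize}(S, 1, n)$ decides membership.

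The part that needs care is the space side. First, one must verify that the recursion is genuinely well-founded and correct under conjunction: it is tempting to commit to one split point for an entire rule, but that is unsound, and the recognizer must search for a split independently for each conjunct, sharing nothing between conjuncts but the fixed interval $[i,j]$. Second, the linear bound on the stack depends on each call descending only into strictly shorter intervals, which in turn relies on binary normal form having removed $\varepsilon$-rules, so that the two halves of a split are both nonempty. Producing that normal form with exactly these properties is the only genuinely nontrivial ingredient; it is entirely standard, so the rest is routine bookkeeping.
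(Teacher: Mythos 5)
The paper does not actually prove this statement: Theorem A is quoted as background (it is Okhotin's classical result on conjunctive grammars, with the paper only remarking that it ``is based on the complexity of standard algorithms for parsing''). So your write-up is a reconstruction rather than a recapitulation, and on that basis your overall plan — binary normal form, tabular CYK for the time bound, a recursive recognizer for the space bound — is the right shape, and your treatment of the quantifier order ``$\exists$ rule, $\forall$ conjunct, $\exists$ split'' and of the $\varepsilon$-free normal form guaranteeing strictly shrinking intervals is correct. The time bound $O(n^3|G|)$ is fine as written.

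There is, however, a genuine gap in the space bound. You argue: recursion depth is at most $n$, each frame stores $O(1)$ items each bounded by $n$, ``hence $O(n)$ space.'' But $\mathrm{DSPACE}$ counts tape cells (bits), and an index bounded by $n$ costs $\Theta(\log n)$ bits. The naive stack therefore occupies $\Theta(n\log n)$ bits, which overshoots the claimed $O(n)$. This is not just a model quibble: the statement asserts membership in $\mathrm{DSPACE}(O(n))$, i.e.\ deterministic linear bounded automata, and your accounting does not land there. To actually obtain $O(n)$ bits you need an extra idea, e.g.\ store the nested intervals on the stack as \emph{increments} $(i_{d+1}-i_d,\; j_d-j_{d+1})$ in a self-delimiting code — these increments telescope, so the total is $O(n)$ bits by concavity of $\log$ — and observe that the split point $p_d$ need not be stored separately because it is recoverable from the child interval together with a one-bit ``left/right'' flag (for a left call $p_d=j_{d+1}$, for a right call $p_d=i_{d+1}-1$), with only the topmost frame needing $p$ explicitly. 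With that bookkeeping the bound becomes $O(n)$ bits, but as written your ``hence'' does not follow and the $\log n$ factor must be addressed.
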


Unfortunately, no. 
Indeed, while there is no small monotone circuits for bipartite matching, \emph{there is} a polynomially-sized conjunctive grammar for $P_n$. Moreover, $P_n$ is an intersection of several regular languages: $P_n = \Sigma^n \cap \bigcap\limits_{i=1}^n (\Sigma^* i \Sigma^*)$, where $\Sigma = \{1. 2, \ldots, n\}$. 
So, what does go wrong? In short, everything.

Because conjunctive grammars are an extension of ordinary grammars, we need to define $f$ the same way as we 
did for ordinary grammars: $f(w_1 \ldots w_n) \coloneqq (x_{w_1, 1} \wedge x_{w_2, 2} \wedge \ldots \wedge x_{w_n, n})$ and
$f(L) \coloneqq \bigvee\limits_{w \in L} f(w)$. Then, the concatenation and the union of languages
will work as expected. Unfortunately, the same is not true for the intersection: 
$f(\{ab\} \cap \{ac\}) = f(\varnothing) = 0 \neq (x_{a, 1} \wedge x_{b, 2} \wedge x_{c. 2}) = (x_{a, 1} \wedge x_{b, 2}) \wedge (x_{a, 1} \wedge x_{c, 2}) = f(\{ab\}) \wedge f(\{bc\})$. 

In fact, not only the conjunction does not
work, but there is no such boolean function $\varphi$, that $f(K \cap L) = \varphi(f(K), f(L))$. 
Indeed, assume the contrary.
Consider an assignment of variables that maps all $|\Sigma| \cdot n$ variables to $1$.
Under such an assignment, $0 = f(\varnothing) = f(\{ab\} \cap \{ac\}) = \varphi(f(\{ab\}), f(\{ac\})) = \varphi(x_{a, 1} \wedge x_{b, 2}, x_{a, 1} \wedge x_{c, 2}) = \varphi(1, 1) = \varphi(x_{a, 1} \wedge x_{b, 2}, x_{a, 1} \wedge x_{b, 2}) = \varphi(f(\{ab\}), f(\{ab\})) = f(\{ab\} \cap \{ab\}) = f(\{ab\}) = x_{a, 1} \wedge x_{b, 2} = 1$, contradiction (all these equation signs correspond to equality in the aforementioned point and not to the equality of functions). 

Hence, there is no way to ``translate'' the intersection of languages to a boolean operation on the corresponding functions. The issue here is the existence of \emph{extra} conjuncts 
like $x_{a, 1} \wedge x_{b, 2} \wedge x_{c, 2}$, which do not correspond to any word, because of a repeated
index (in this case, index $2$). So, if we blindly transform the grammar into a monotone circuit, the resulting circuit will not compute $f(K)$ itself, but, rather, $f(K)$ with some extra conjuncts. 

So, how to deal with extra conjucts? I do not know any promising way, except for considering
only such inputs, that two variables with the same index cannot be mapped to $1$ at the same time
(then, each extra conjunct is automatically mapped to $0$). Or, in  existing terms, by 
considering
word-like inputs only. Null-like inputs are irrelevant, because $f(K)$ always evaluates to $0$ on them.

Then, the rest of argument will proceed without a hitch, leading to the following result: if $G$ is a conjunctive grammar and $K$ is the $n$-slice of $L(G)$, then there is a polynomial-size monotone  boolean circuit that outputs the same values as the monotone boolean function $f(K) = \bigvee\limits_{w \in K} (x_{w_1, 1} \wedge \ldots x_{w_n, n})$ on all word-like inputs.

Unfortunately, this is an extremely weak statement. 
By the above argument, because \emph{there exists} a polynomially-sized conjunctive grammar for $P_n$, there must be a polynomial-sized monotone circuit that works on all word-like inputs.
And, indeed, there is. A boolean \emph{formula}, even: $\bigwedge_{i=1}^n \bigvee_{j=1}^n x_{i, j}$. Informally, this formula checks whether each letter of the alphabet is present somewhere in the word. The uncanny resemblence between this formula and the original conjunctive grammar is not a coincidence.

Indeed, as mentioned in the previous Section, the result of evaluating $f(K)$ on a word-like input corresponding to word $w$, is $1$ when $w \in K$ and $0$ when $w \notin K$. So, if we want to prove that there is no conjunctive grammar for $L_{\textrm{orig}}$, we need to prove that the inclusion 
problem for $L_n$ is harder time- or memory-wise than the cubic parsing algorithm that we used in the reduction!
In the end, our best hope here is arriving to a  \emph{weaker} version of Theorem~\ref{conj_bound} in a \emph{very} roundabout way.
This is dissapointing, but not surprising. As I menitioned before, the Idea~\ref{idea}
works \emph{exactly} because it expands the ``working space'' from only word-like inputs
to a much richer set. So, if we are forced to shrink the space back, we are left with a simple
restatement of Cocke--Younger--Kasami algorithm.

\section{Conclusion}

The above Sections show that the ideas behind CYKSZ algorithm are somewhat universal and can be applied in different and, sometimes, unexpected
ways. Except for the above, I suggest another interesting direction of research. As Theorem~\ref{five_guys_algo} shows, the limited equivalence
is in P for reasonable enough statement of the problem. However, the limited equivalence for ordinary grammars is in NP even when one of the 
grammars describes the whole language. The reason is the same as in proof of undecidability of ordinary grammar equivalence: ordinary
grammars can encode computation histories of Turing machines. Hence, the existence of Theorem~\ref{five_guys_algo} suggest in some vague
way that unambiguous grammars cannot encode computation histories of Turing machines. Therefore, it seems that we have some heuristic
evidence that suggests that equivalence for unambiguous grammars is ``much weaker'' compared to the equivalence for arbitrary ordinary grammars.

\end{document}

